\begin{document}

\title{Turing patterns from dynamics of early HIV infection over a two-dimensional surface}
\author{O. Stancevic \and C. Angstmann \and J.M. Murray \and B.I. Henry}
\institute{O. Stancevic \and C. Angstmann \and B.I. Henry \at
          School of Mathematics and Statistics, University of New South Wales, Sydney NSW 2052, Australia.\\
          \email{b.henry@unsw.edu.au}
         \and
         J.M. Murray \at
           School of Mathematics and Statistics, University of New South Wales, Sydney NSW 2052, Australia.\\
          The Kirby Institute, University of New South Wales,  Sydney NSW 2052, Australia.\\
         }

\date{Version date \today} 
\maketitle

\begin{abstract}
We have developed a mathematical model for in-host virus dynamics that includes
spatial chemotaxis and diffusion across a two dimensional surface representing the vaginal or rectal epithelium at primary HIV infection.
A linear stability analysis of the steady state solutions identified conditions for Turing instability pattern formation.
We have solved the model equations numerically using parameter values obtained from previous experimental results for HIV infections.
Simulations of the model for this surface show hot spots of infection.
Understanding this localization is an important step in the ability to correctly model early HIV infection. These spatial variations also have implications for the development and effectiveness of microbicides against HIV

\keywords{HIV \and Turing patterns \and In-host viral dynamics \and Chemotaxis \and Reaction-diffusion}
\subclass{92C15 \and 92C17 \and 92C50 \and 35B36 \and 35K57 \and 92D30 \and 60J70}
\end{abstract}
\section{Introduction}
\label{sect:intro}

The normal response in an individual after being infected by a virus is activation of the immune system, driving infection levels down. If the immune response is sufficiently potent then the disease can be completely eradicated from the body, but in many instances this does not occur. Instead, over the course of time, an eventual balance of disease replication and immune clearance is established leading to chronic infection. These steady state outcomes, clearance versus chronic infection, are suggestive of simple dynamics but this ignores  spatial variations, including possible hot spots of infection \cite{Haase1996},  which confound the dynamics in both transient and long time behaviours.
Although there has been some mathematical modelling of acute HIV infection \cite{Murray1998,DeBoer2007,Ribeiro2010}, current mathematical models of HIV infection mainly focus on response  to antiretroviral therapy of HIV viral levels after the viral setpoint
 \cite{Perelson1996,Wei1995}. Moreover these models assume a well-mixed environment with no real spatial behaviour. This is very different to what happens at the very earliest stages of infection at the vaginal or rectal epithelium during sexual transmission of HIV from an infected man to his partner. A single lineage usually expands in the new host, even though the genetically heterogeneous inoculum contains numerous infectious units. The dynamics of high HIV seminal loads leading to sporadic infection and the establishment of single foci of infection \cite{Keele2008}, are difficult to understand biologically and completely fall outside the sphere of usual mathematical modelling of infectious diseases with simple ordinary differential equations.



Spatially heterogeneous outcomes may arise from underlying spatial heterogeneity possibly due to tissue architecture or damage from other sexually transmitted infections \cite{Haase2010}, but it is also possible to have such non-uniformities spontaneously arise from the infection dynamics. One of the archetypical manners in which this occurs is through the existence of a Turing instability in a set of partial differential equations (PDEs). Turing's seminal work in 1952 \cite{Turing1952} showed that for some nonlinear reaction-diffusion equations the steady state solution of the system is not spatially uniform. The Turing instability occurs when a spatially homogeneous steady state of the reaction dynamics, which is linearly stable in the absence of diffusion, becomes linearly unstable when the reactions are coupled with the diffusion. This can occur when there are two or more nonlinearly interacting species with different diffusivities. The resultant inhomogeneous spatial pattern is called a Turing pattern.  Turing patterns have been proposed to explain patterning  in numerous physical, chemical and biological systems
 \cite{CM2012},  including models of morphogenesis \cite{MurrayBook,Keshet2005,MWBGL2012,Meinhardt2012} and some chemical reactions \cite{OS1991}. Turing pattern formation has also been investigated in an SIR model to predict the spatial
 transmission of diseases in a population \cite{LJ2007}.

In order to investigate the impact of spatial dynamics in a simple mathematical model of HIV infection
we  extended an SIR model for in-host virus dynamics \cite{Nowak1996,Nowak2000} to include spatially random diffusion and spatially directed chemotaxis \cite{Jin2008}. The spatio-temporal behaviour of this system is investigated within the framework of Turing instability-induced pattern formation \cite{Turing1952} and is shown to result in spatial hot spots of infection for certain ranges of the parameter values. We  further explored the behaviour of the model system through numerical simulations which reveals  complicated spatial dynamics persisting through transient and long time behaviours. These spatial variations have implications for the development and effectiveness of microbicides against HIV \cite{Klasse2006}.

\section{Model Equations}
The standard SIR-based model for in-host virus dynamics is given by \cite{Wei1995,Nowak2000,Perelson1996}
\begin{equation}
  \label{eq:sir}
 \begin{aligned}
\frac{d T}{d t}=&s-kVT-\mu T\\
\frac{d I}{d t}=&kVT-\delta I\\
\frac{d V}{d t}=&N I -c V.
\end{aligned}
\end{equation}
In these equations the dependent variables are:  $T$, the population of uninfected target cells;
$I$, the population of infected cells; and  $V$, the population of free virions, where these all vary with time $t$, but not space.
It is assumed that the target cells are supplied at a constant rate $s$ and they are removed
either through cell death with death rate $\mu$ or by becoming
infected by virions. The parameter $k$ represents
 the  rate of infection of target cells per virion; $N$ represents the
number of virions produced per unit time, per infected cell; $\delta$ is the death rate of the infected cells
and $c$ is the clearance rate of virions.

Extending this model to  also include spatial aspects so that the independent variables are now $(t,\textbf{x})$ produces the following model
\begin{equation}\label{eq:model}
\begin{aligned}
  \frac{\partial T}{\partial t}=&s-kVT-\mu T+D_{T}\nabla^2 T-\chi \nabla(T \nabla I)\\
\frac{\partial I}{\partial t}=&kVT-\delta I+D_{I}\nabla^2 I\\
\frac{\partial V}{\partial t}=&N I-c V+D_{V}\nabla^2 V,
\end{aligned}
\end{equation}
where $T$, $I$ and $V$ are now \emph{concentrations} of target cells, infected cells and virus respectively, with appropriate units according to the space dimension (e.g. $\mathrm{cells/mm}$ for 1D or $\mathrm{cells/mm^{2}}$ for 2D). In this model it is assumed that the target CD4+ T cells,  infected cells and  free virions all diffuse with diffusion constants $D_T$, $D_I$ and $D_V$ respectively. We have also included
a spatial chemotaxis term
$-\chi \nabla(T \nabla I)$ to represent the  chemotactic attraction of  target immune  cells driven by the concentration gradient of cytokines from inflammation at sites of infection. The random walk diffusive motion of these T cells has been well established both {\em in vitro} and {\em in vivo} \cite{Miller2003} with some evidence for an anomalous component to the diffusion \cite{Harris2012}, although in this work, for simplicity, we will assume a purely Brownian diffusion. The chemotaxis of T cells  is more difficult to establish but careful experiments have clearly demonstrated the chemotaxis of T cells in response to gradients of chemokines in microfluidic {\em in vitro} studies \cite{Lin2006}. The vaginal or rectal epithelium will be represented by a 2-dimensional surface so that our space component is given by $\textbf{x}=(x_1,x_2)^T$.

\section{Turing patterns}
We are interested in whether the dynamics of the spatially extended model allow for formation of spatial patterns. The system of partial differential equations \eqref{eq:model} may be classified as a reaction-chemotaxis-diffusion system. Patterns may occur in this system in the neighbourhood of a spatially homogeneous steady state provided the conditions for a Turing instability are met, namely that this spatially homogeneous steady state is:
\begin{itemize}
  \item[(T1)] linearly stable in the absence of diffusion and chemotaxis; and
  \item[(T2)] linearly unstable in the presence of diffusion and chemotaxis.
\end{itemize}

\section{Steady States}
The spatially extended system \eqref{eq:model}, has the same (spatially) homogeneous steady states as
the standard model \eqref{eq:sir}:
 the disease-free state
\begin{align*}
T_0^{*}&=\frac{s}{\mu}, & I_0^{*}&=0, & V_0^{*}&=0,
\end{align*}
and the endemic state
\begin{align*}
T^{*}&=\frac{c \delta}{k N},&I^{*}&=\frac{s k N- c \delta \mu}{N k \delta},&V^{*}&=\frac{s k N- c \delta \mu}{c k \delta}.
\end{align*}

\section{Non-dimensional equations}
In order to reduce the number of parameters and simplify some of the analysis it is useful to work with a non-dimensional version of the system \eqref{eq:model}. Let the non-dimensional dependent variables be given by
\begin{align*}
  u_1 &= T / T_{c}\\
  u_2 &= I / I_{c}\\
  u_3 &= V / V_{c}
\end{align*}
and the non-dimensional independent space and time variables be given by
\begin{align*}
  X_1 &= x_1 / L\\
  X_2 &= x_2 / L\\
  \tau &= t / t_c
\end{align*}
where the values of $T_{c}, I_{c}, V_{c}, L$ and $t_c$ shall be chosen later in such a way to minimise the total number of parameters.
We substitute the non-dimensional variables into \eqref{eq:model} and obtain

\begin{equation}
\begin{aligned}
  \frac{\partial{u_1}}{\partial \tau} &= \left( \frac{st_c}{T_{c}} \right) - \left( V_{c}t_c k \right) {u_3}{u_1}  - \left( \mu t_c \right){u_1} + \left( \frac{D_T t_c}{L^2} \right) \nabla^2{u_1} - \left( \frac{t_c I_{c}\chi}{L^2} \right) \nabla \left( {u_1}\nabla {u_2} \right)\\
    \frac{\partial{u_2}}{\partial \tau} &= \left( \frac{kV_{c}T_{c}t_c}{I_{c}} \right)u_1 u_3 - \left( \delta t_c \right){u_2} + \left( \frac{D_I t_c}{L^2} \right)\nabla^2{u_2}\\
  \frac{\partial{u_3}}{\partial \tau} &= \left( \frac{NI_{c} t_c}{V_{c}}\right){u_2} - \left( c t_c \right){u_3} + \left( \frac{D_V t_c}{L^2} \right) \nabla^2 {u_3}.
\end{aligned}
\end{equation}

There is now freedom in setting the scaling factors and simplifying the equations. To this end set $T_{c}$, $I_{c}$ and $V_{c}$ to the corresponding values of the endemic homogeneous steady state $T^*, I^*, V^*$.
The non-dimensional equations then become
\begin{equation}\label{eq:nondim1}
\begin{aligned}
  \frac{\partial{u_1}}{\partial \tau} &= t_c \left( \frac{skN}{c\delta} \right)(1- {u_3}{u_1}) - t_c\mu {u_1}(1 - {u_3}) + \left( \frac{D_T t_c}{L^2} \right) \nabla^2{u_1}\\ &\quad  - t_c \left( \frac{ (skN - c\delta\mu)\chi}{Nk\delta L^2} \right) \nabla \left( {u_1}\nabla {u_2} \right)\\
  \frac{\partial{u_2}}{\partial \tau} &= (\delta t_c){u_1}{u_3} - \left( \delta t_c \right){u_2} + \left( \frac{D_I t_c}{L^2} \right)\nabla^2{u_1}\\
  \frac{\partial{u_3}}{\partial \tau} &= (c t_c){u_2} - \left( c t_c \right){u_3} + \left( \frac{D_V t_c}{L^2} \right) \nabla^2 {u_3}.
\end{aligned}
\end{equation}

We may further simplify \eqref{eq:nondim1} by choosing $t_c := 1/\mu$ and $L^2 := D_T/\mu$. Then we set the new non-dimensional parameters to be the following
\begin{align*}
  \xi &:= \frac{skN}{c\delta\mu}, &  d_{I} &:= \frac{D_I}{D_T}, &  d_{V} &:= \frac{D_V}{D_T},\\
  d_{\chi} &:= \frac{s\chi}{\delta D_T}\left( 1 - \frac{1}{\xi} \right), &  \alpha &:= \frac{\delta}{\mu}, &  \beta &:= \frac{c}{\mu}.
\end{align*}
Finally, we arrive at the  non-dimensional version of \eqref{eq:model}:
\begin{equation}\label{eq:model_nde}
  \begin{aligned}
  \frac{\partial{u_1}}{\partial \tau} &= \xi - (\xi-1)u_1u_3 - u_1 + \nabla^2{u_1} - d_{\chi}\nabla \left( {u_1}\nabla {u_2} \right)\\
  \frac{\partial{u_2}}{\partial \tau} &=\alpha({u_1}{u_3} - {u_2}) + d_I \nabla^2{u_2}\\
  \frac{\partial{u_3}}{\partial \tau} &= \beta({u_2} -{u_3}) + d_{V} \nabla^2 {u_3}.
\end{aligned}
\end{equation}
It is straightforward to check that in the absence of spatial variation this system admits the following two equilibria:
\begin{itemize}
  \item[(E1)] The endemic spatially homogeneous steady-state at  $(u_1^*, u_2^*, u_3^*) = (1,1,1)$; and
  \item[(E2)] The disease-free spatially homogeneous steady state at $(u_1^*, u_2^*, u_3^*) = (\xi, 0, 0)$.
\end{itemize}

Observe also that the value of $\xi$ in \eqref{eq:model_nde} determines one of two distinct scenarios:
\begin{itemize}
  \item[(i)] When $\xi < 1$ we have $I,V < 0$ whenever $u_2, u_3 \ge 0$, hence the endemic homogeneous steady state is not physically relevant as it corresponds to negative concentrations of infected cells and free virus. The only spatially homogeneous steady state is the state free of disease, $(\xi, 0, 0)$.
  \item[(ii)] When $\xi > 1$ both of the spatially homogeneous steady states, $(\xi, 0, 0)$ and $(1,1,1)$, are physically relevant.
\end{itemize}

In order to explore the possibility of Turing pattern formation we shall consider the stability of \eqref{eq:model_nde} linearised about each of the two steady states.

\section{Linearising about the endemic spatially homogeneous steady state}

Let us assume for this section that $\xi>1$ as we have already established above that this is a necessary condition for a physically relevant endemic homogeneous steady state. We perturb this equilibrium $(u_1^{*}, u_2^*, u_3^*) = (1,1,1)$ by writing $u_i = u_i^{*}+\Delta u_i$ for each $i\in\{1,2,3\}$.
The linearised form of \eqref{eq:model_nde} for the perturbations $\Delta {u_i}$ is given by
\begin{equation}
  \frac{\partial}{\partial \tau}
  \begin{bmatrix}
    \Delta{u_1}\\ \Delta{u_2}\\ \Delta{u_3}
  \end{bmatrix}  =
  \begin{bmatrix}
    -\xi & 0 & -(\xi-1)\\
    \alpha & -\alpha & \alpha\\
    0 & \beta & -\beta
  \end{bmatrix}   \begin{bmatrix}
    \Delta{u_1}\\ \Delta{u_2}\\ \Delta{u_3}
  \end{bmatrix} +
  \begin{bmatrix}
    1 & -d_{\chi} & 0\\
    0 & d_I & 0\\
    0 & 0 & d_V
  \end{bmatrix} \nabla^2 \begin{bmatrix}
    \Delta{u_1}\\ \Delta{u_2}\\ \Delta{u_3}
  \end{bmatrix}.
  \label{eq:nondimlin}
\end{equation}
It shall be convenient to carry out Fourier transforms with respect to the  spatial variables in equation \eqref{eq:nondimlin}. This yields
\begin{equation}
   \frac{\partial}{\partial \tau}
  \begin{bmatrix}
    \Delta{U_1}\\ \Delta{U_2}\\ \Delta{U_3}
  \end{bmatrix}  =
  \begin{bmatrix}
    (-\xi - q^2) & d_{\chi} q^2 & -(\xi-1)\\
    \alpha & (-\alpha - d_Iq^2) & \alpha\\
    0 & \beta & (-\beta - d_Vq^2)
  \end{bmatrix}   \begin{bmatrix}
    \Delta{U_1}\\ \Delta{U_2}\\ \Delta{U_3}
  \end{bmatrix},
  \label{eq:fourierlin}
\end{equation}
where $\Delta U_i$ denote the Fourier transform of $\Delta u_i$ and $q^2 := \mathbf{q}^T \mathbf{q}$ where $\mathbf{q}\in\mathbb{R}^{d}$ is the Fourier variable, with $d$ the spatial dimension. The conditions for Turing instabilities are determined from the eigenvalue spectrum of the coefficient matrix in \eqref{eq:fourierlin}. The requirement that the homogeneous steady state is stable in the absence of diffusion and chemotaxis is met if all eigenvalues have negative real parts when $q^2=0$. A Turing instability may then occur if one or more eigenvalues have positive real parts for some $q^2>0$.

The characteristic polynomial of the matrix in \eqref{eq:fourierlin} is
\begin{equation}\label{eq:chr}
\begin{aligned}
  &\lambda^3 + \lambda^2(\xi + \alpha + \beta + (1+d_I + d_V)q^2)) \\
  + & \lambda\left\{\xi\alpha + \xi\beta + (\alpha d_V + \beta d_I + \xi d_I + \alpha + \xi d_V + \beta - \alpha d_\chi)q^2\right.\\
  + & \left.(d_I d_V + d_I + d_V )q^4 \right\}\\
  + & \left\{ (\xi\alpha d_V + \xi\beta d_I - \alpha\beta d_\chi)q^2 + (\xi d_I d_V + \alpha d_V + \beta d_I - \alpha d_\chi d_V)q^4\right.\\
  + & \left.(d_I d_V)q^6 + (\xi-1)\alpha\beta \right\}
\end{aligned}
\end{equation}

To check for stability, recall that a cubic polynomial $\lambda^3 + a\lambda^2 + b\lambda + c$ has all roots in the left half complex planeif and only if $a,b,c > 0$ and $ab > c$ (Ruth--Hurwitz stability criterion for a cubic polynomial \cite{Hurwitz1964}). Below we consider the characteristic polynomial for different values of $q^2$.

\subsection{Case of no spatial variation}

In the absence of diffusion and chemotaxis ($q^2 = 0$) the characteristic polynomial \eqref{eq:chr} simplifies to
\begin{equation}
  \lambda^3 + \lambda^2(\alpha + \beta + \xi) + \lambda(\alpha\xi + \beta\xi) +(\xi-1)\alpha\beta.\label{eq:chp_homogeneous}
\end{equation}

Since we have assumed that $\xi>1$, all the coefficients of \eqref{eq:chp_homogeneous} are positive. Thus the remaining condition
\begin{equation}
  (\alpha+\beta+\xi)(\alpha+\beta)\xi > (\xi-1)\alpha\beta
  \label{eq:hom_st_cond}
\end{equation}
is necessary and sufficient for all roots of \eqref{eq:chp_homogeneous} to be in the left half-plane.

Since $\alpha$ and $\beta$ are positive and $\xi>1$, equation \eqref{eq:hom_st_cond} is always true:
\begin{align*}
  (\alpha + \beta + \xi)(\alpha + \beta)\xi &> (\alpha+\beta)^2\xi\\
  & > 2\alpha\beta\xi\\
  & > (\xi-1)\alpha\beta.
\end{align*}
Thus the endemic steady state is stable in the absence of diffusion and chemotaxis and the first Turing condition (T1) is satisfied.

\subsection{Case of spatial variation}
Now we investigate the nature of the roots of \eqref{eq:model_nde} in the presence of diffusion and chemotaxis ($q^2 > 0$).

Firstly, consider the case for large $q^2$. Then the characteristic polynomial \eqref{eq:chr} will have all its coefficients positive (as in each term the largest power of $q^2$ has a positive coefficient). Moreover, the coefficient of $\lambda^2$ tends to $(1+d_I+d_V)q^2$, the coefficient of $\lambda$ tends towards $(d_{I}d_{V} + d_{I} + d_{V})q^4$ while the constant coefficient is dominated by $(d_{I}d_{V})q^{6}$. It is then straightforward to show that the product of the former two is larger than the latter, hence for sufficiently large $q^2$ all roots of \eqref{eq:chr} are in the left half complex plane and the system is stable.

Even though for large values of $q^2$ the system becomes stable, there may still be sufficiently large values of $d_{\chi}$ and an appropriate range of values of $q^2$ for which instabilities occur.
To determine the threshold on $d_{\chi}$ (or $\chi$) above which instabilities may be possible, let us first write the polynomial \eqref{eq:chr} in the following form:
\begin{equation}
  \label{eq:chp_simple}
  \lambda^3 + \lambda^2(a_1 + a_2 q^2) + \lambda(b_1 + b_2q^2 + b_3q^4) + (c_1+c_2q^2 + c_3q^4 + c_4q^6)
\end{equation}
for appropriate values of coefficients $a_{1}, a_{2}, b_{1}, b_2, b_3, c_1, c_2, c_3$ and $c_4$. Note that all these coefficients are positive except for possibly $b_2$, $c_2$ and $c_3$. If we assume that these too are positive then:
\begin{itemize}
  \item[(S1)] $a_1b_1 > c_1$\\
  \item [(S2)] $a_1b_2 + a_2b_1 > c_2$\\
  \item [(S3)] $a_1b_3 + a_2b_2 > c_3$\\
  \item [(S4)] $a_2b_3 > c_4$.
\end{itemize}
The proof of this is straightforward but tedious, and is shown in the Appendix.
Now, (S1)--(S4) imply that
\begin{displaymath}
  (a_1 + a_2q^2)(b_1 + b_2q^2 + b_3q^4) > c_1 + c_2q^2 + c_3q^4 + c_4q^6,
\end{displaymath}
for all $q$, therefore by the Ruth-Hurwitz stability criterion, all roots of \eqref{eq:chp_simple} (and \eqref{eq:chr}) are in the left half complex plane and thus \eqref{eq:fourierlin} is stable.

We conclude that the only way the system in \eqref{eq:fourierlin} may become unstable is if at least one of $b_2$, $c_2$ or $c_3$ becomes non-positive.
Therefore the necessary (but not sufficient) condition for Turing condition (T2) to hold is that either
\begin{itemize}
\item[(C1)] $b_2 \le 0$, that is $d_\chi \ge d_V + (\beta + \xi) d_I / \alpha + 1 + \xi + 1/\alpha$;
\item[(C2)] $c_2 \le 0$, that is $d_\chi \ge \xi d_V / \beta + \xi d_I/\alpha$; or
\item[(C3)] $c_3 \le 0$, that is $d_\chi \ge \xi d_I / \alpha + 1 + \beta d_I / (\alpha d_V) $.
\end{itemize}
Using $d_{\chi}  = \frac{s\chi}{\delta D_T}\left( 1 - \frac{1}{\xi} \right)$ we see that (C1), (C2) and (C3) are equivalent to
\begin{itemize}
  \item[(C1')] $\chi \ge \frac{\delta D_T}{s\chi \left( 1-1/\xi \right)}\left( d_V + (\beta + \xi) d_I / \alpha + 1 + \xi + 1/\alpha \right)$,
  \item[(C2')] $\chi \ge \frac{\delta D_T}{s\chi \left( 1-1/\xi \right)}\left(\xi d_V / \beta + \xi d_I/\alpha\right)$, or
  \item[(C3')] $\chi \ge \frac{\delta D_T}{s\chi \left( 1-1/\xi \right)} \xi d_I / \alpha + 1 + \beta d_I / (\alpha d_V)$
\end{itemize}
respectively. Given a set of parameters, for the purpose of determining a possible existence of a Turing instability we shall only consider the weakest condition of the three.

\section{Linearising about the disease-free spatially homogeneous steady state}
Similar to our approach for the endemic equilibrium in the previous section, we may linearise \eqref{eq:model_nde} about the disease-free spatially homogeneous steady state $(u_1^{*}, u_2^*, u_3^*) = (\xi,0,0)$. Since this state is always physically relevant, for the time being we need not assume any additional conditions on $\xi$ (apart from positivity). After performing a Fourier transform of \eqref{eq:model_nde} linearized about $(\xi, 0,0)$,  we get
\begin{equation}\label{eq:no_disease}
  \frac{\partial}{\partial \tau}
  \begin{bmatrix}
    \Delta{U_1}\\ \Delta{U_2}\\ \Delta{U_3}
  \end{bmatrix}  =
  \begin{bmatrix}
    -1-q^2 & d_{\chi}\xi q^2 & -\xi(\xi-1)\\
    0 & (-\alpha-d_Iq^2) & \alpha\xi\\
    0 & \beta & (-\beta-d_Vq^2)
  \end{bmatrix}
  \begin{bmatrix}
    \Delta{U_1}\\ \Delta{U_2}\\ \Delta{U_3}
  \end{bmatrix}.
\end{equation}

In the spatially homogeneous setting ($q^2=0$), it is easy to check that the disease-free steady state is stable if and only if $\xi < 1$.

The matrix in \eqref{eq:no_disease} has $-1-q^2 < 0$ as one of its eigenvalues. The remaining two are given by the eigenvalues of the $2\times 2$ lower-right submatrix
\begin{displaymath}
  \begin{bmatrix}
    (-\alpha-d_Iq^2) & \alpha\xi\\
    \beta & (-\beta-d_Vq^2).
  \end{bmatrix}
\end{displaymath}
It is easy to check that if $\xi<1$ the eigenvalues of this matrix lie in the left half complex plane (e.g. observe that the matrix has negative trace and positive determinant), and this holds for all values of $q^2$. So (T1) and  (T2) cannot both be true, and we conclude that the Turing conditions cannot be satisfied in a neighbourhood of the disease-free steady state.

\section{Regularisation of chemotaxis}
It is well-known that in two spatial dimensions chemotaxis above a certain threshold results in a finite time blow up of solutions to the governing equations (see e.g. \cite{Horstmann2003}). The standard approach to deal with this non-realistic phenomenon  is to introduce a regularisation term to \eqref{eq:model_nde}.  We shall use a density-dependent sensitivity regularisation, studied in \cite{Velazquez2004}, based on the assumption that with increasing cell density, their advective velocity reduces. For other forms of regularisation see the survey article by Hillen and Painter \cite{Hillen2009}. Our new governing equations become:

\begin{equation}\label{eq:model_reg}
  \begin{aligned}
    \frac{\partial{u_1}}{\partial \tau} &= \xi - (\xi-1)u_1u_3 - u_1 + \nabla^2{u_1} - (1+\epsilon)d_{\chi}\nabla \left( \frac{u_1}{1+\epsilon u_1}\nabla {u_2} \right)\\
  \frac{\partial{u_2}}{\partial \tau} &=\alpha({u_3}{u_1} - {u_2}) + d_I \nabla^2{u_2}\\
  \frac{\partial{u_3}}{\partial \tau} &= \beta({u_2} -{u_3}) + d_{V} \nabla^2 {u_3}
\end{aligned}
\end{equation}
for some dimensionless regularisation parameter $\epsilon \ge 0$ such that in the limit as $\epsilon\to 0$ we recover the original non-regularised model.

Let us introduce the notion of effective chemotaxis:
\begin{displaymath}
  \tilde{d}_\chi(u_1; \epsilon) := \frac{1+\epsilon}{1+\epsilon u_1} d_\chi.
\end{displaymath}
Then $\tilde{d}_{\chi} = d_{\chi}$ at the endemic steady state ($u_1=1$), and the linearised equations \eqref{eq:nondimlin} and \eqref{eq:fourierlin} remain unchanged, therefore in this case our stability analysis and conditions for Turing patterns formation from previous sections also apply in the regularised setting.
Also note that $\tilde{d}_{\chi}\to0$ as $u_1\to\infty$.

\section{Parameters, units and dimensions}

There has been considerable investigation of parameter values for variants of the SIR model, based on trials in HIV-infected individuals  \cite{Perelson1996,Perelson1997,Murray2007,Murray2011} and from SIV infected macaques \cite{Mandl2007}, and it is reasonable to assume that these parameter values  provide useful starting approximations for variants of the model \eqref{eq:sir}, that also includes a spatial component.

We assume the original HIV model \eqref{eq:sir} has parameter estimates given by
$N=480 \;\mathrm{virions\; cell^{-1}\;day^{-1}}$, $k=3.43\times10^{-5}\;\mathrm{ml\;virions^{-1}\;day^{-1}}$, $\delta=0.5\;\mathrm{day^{-1}}$, $c=3\;\mathrm{day^{-1}}$, $s=10\;\mathrm{cells\;mm^{-3}\;day^{-1}}$, $\mu=0.03\;\mathrm{day}^{-1}$.

We will perform numerical simulations in both one and two spatial dimensions. Note that the constants $k$ and $s$ given above are volume-based. In order to adapt them to two or one spatial dimensions let us assume that the region in which we are solving the PDE is either a thin rectangular sheet of thickness $h$ or a thin wire with a square $h\times h$ cross-section.
Thus the new dimension-specific values of $k$ and $s$ become $\tilde{k} = kh^{3-d}$ and $\tilde{s} = sh^{3-d}$ respectively, where $d\in\{1,2\}$ is the number of spatial dimensions. Note that by changing $d$ the only non-dimensional parameter that changes is $d_{\chi}$. For all of the numerical simulations we shall take $h=0.1\;\mathrm{mm}$.

The diffusion of T cells in lymphatic tissue has been estimated at
\begin{equation}
  D_{T}=1.1\; \mathrm{\mu m^{2}\;s^{-1}} = 0.09504\;\mathrm{mm^2\;day^{-1}}
\end{equation}
It can be assumed that the uninfected ($T$) and infected ($I$) CD4+ T cells will have similar diffusion coefficients. Different studies have shown virions with diffusion coefficients of the order of $0.0088 \mathrm{\mu m^{2}\;s^{-1}}$.
The only parameters without experimental bounds are the effective chemotaxis term $\chi$ and the regularisation constant $\epsilon$. For the above stated parameters, the weakest necessary (but not sufficient) condition for Turing instability is $(C2)$, requiring $d_{\chi} > 219.8$, that is $\chi > 10.4\;\mathrm{mm^4\;cell^{-1}\;day^{-1}}$ in two spatial dimensions and $\chi > 104\;\mathrm{mm^3\;cell^{-1}\;day^{-1}}$ in one spatial dimension. These are our lower bounds on the chemotactic threshold.

Figure \ref{lambdavsq} shows the real part of the leading eigenvalue of the matrix of the linearized system \eqref{eq:fourierlin} for the parameters stated above as a function of spatial frequency. Note that even though $\chi = 110 \mathrm{\;mm^3\;cell^{-1}\;day^{-1}}$ is above our calculated lower bound on the chemotactic threshold, the system remains stable for all spatial frequencies. Turing conditions are only met for $\chi$ somewhere between $110$ and $120\mathrm{\;mm^3\;cell^{-1}\;day^{-1}}$.

Recall that we set the scaling factors for independent variables to $t_c = 1/\mu$ and $L^2 = D_T/\mu$. With the given parameters, this results in values $t_c \approx 33.3\; \mathrm{days}$ and $L \approx 1.78\;\mathrm{mm}$.

\begin{figure}
\begin{center}
\includegraphics[width=1\columnwidth]{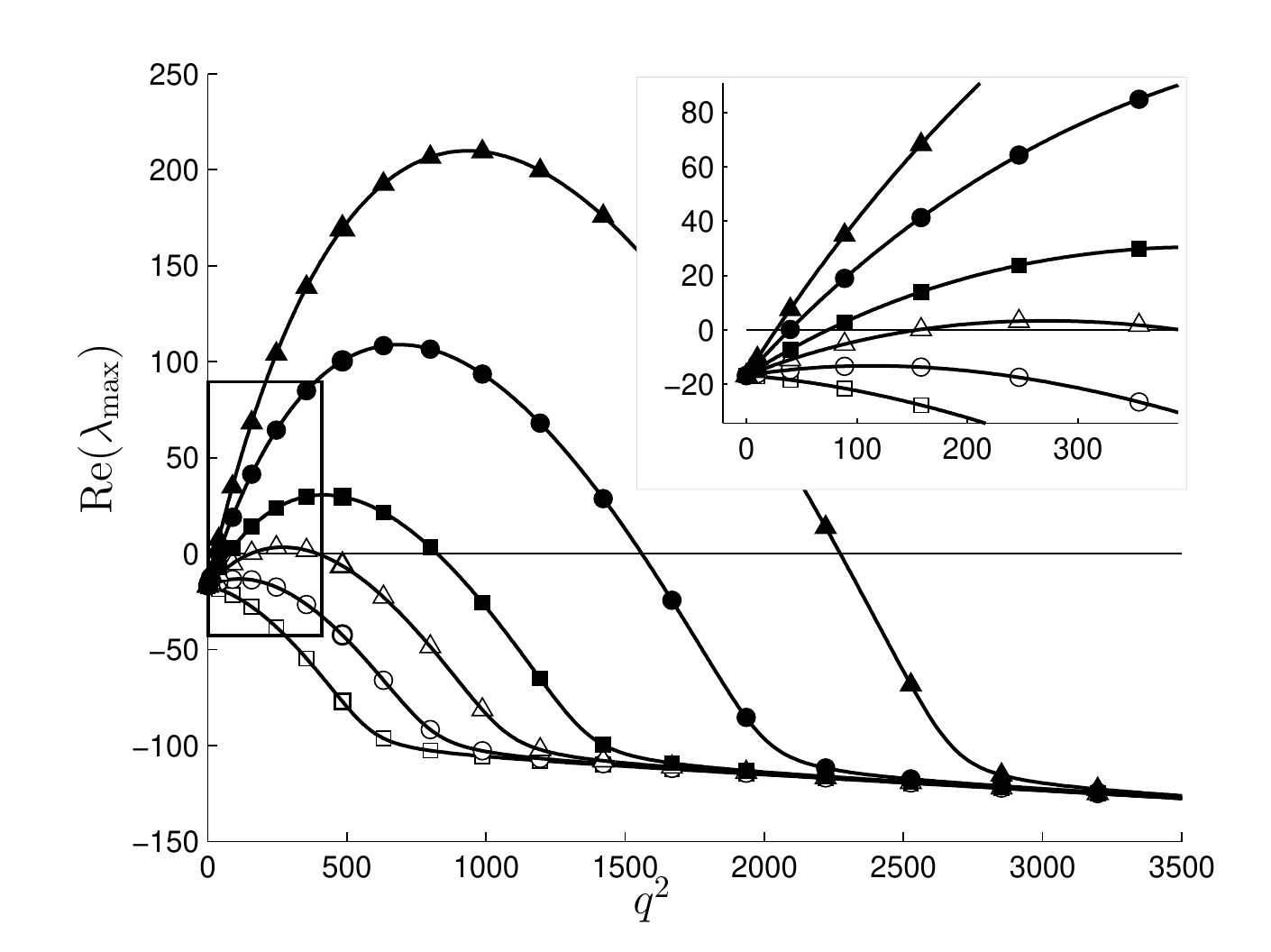}
\caption{Leading eigenvalues of the linear model in one spatial dimension as a function of spatial frequency for several values of $\chi$. Empty squares correspond to $\chi = 100 \mathrm{\;mm^3\; cell^{-1}\; day^{-1}}$; empty circles to $\chi = 110 \mathrm{\;mm^3\; cell^{-1}\; day^{-1}}$; empty triangles to $\chi = 120 \mathrm{\;mm^3\; cell^{-1}\; day^{-1}}$; filled squares to $\chi = 130 \mathrm{\;mm^3\; cell^{-1}\; day^{-1}}$; filled circles to $\chi = 150 \mathrm{\;mm^3\; cell^{-1}\; day^{-1}}$; and filled triangles to $\chi = 170 \mathrm{\;mm^3\; cell^{-1}\; day^{-1}}$. The upper right panel shows an enlarged view around the origin.}
\label{lambdavsq}
\end{center}
\end{figure}

\section{Numerical solutions in one spatial dimension}

The model PDE is solved numerically on a one-dimensional domain of length $L$ with zero-flux boundary conditions. In one-dimension the chemotaxis term need not be regularised, thus, unless otherwise stated, we shall use $\epsilon=0$ in this section. We used MATLAB's built in 1D PDE solver {\tt pdepe}. The initial conditions are taken as either a small random or deterministic perturbation around the endemic steady state.

For $\chi$ below $104\mathrm{\;mm^3\;cell^{-1}\;day^{-1}}$ (our lower bound on the chemotactic threshold for a Turing instability) the solutions become flat with no spatial features as time progresses. This can be seen in Figure \ref{fig:chemotaxis1d_chi100}.

\begin{figure}
\begin{center}
\includegraphics[width=1\columnwidth]{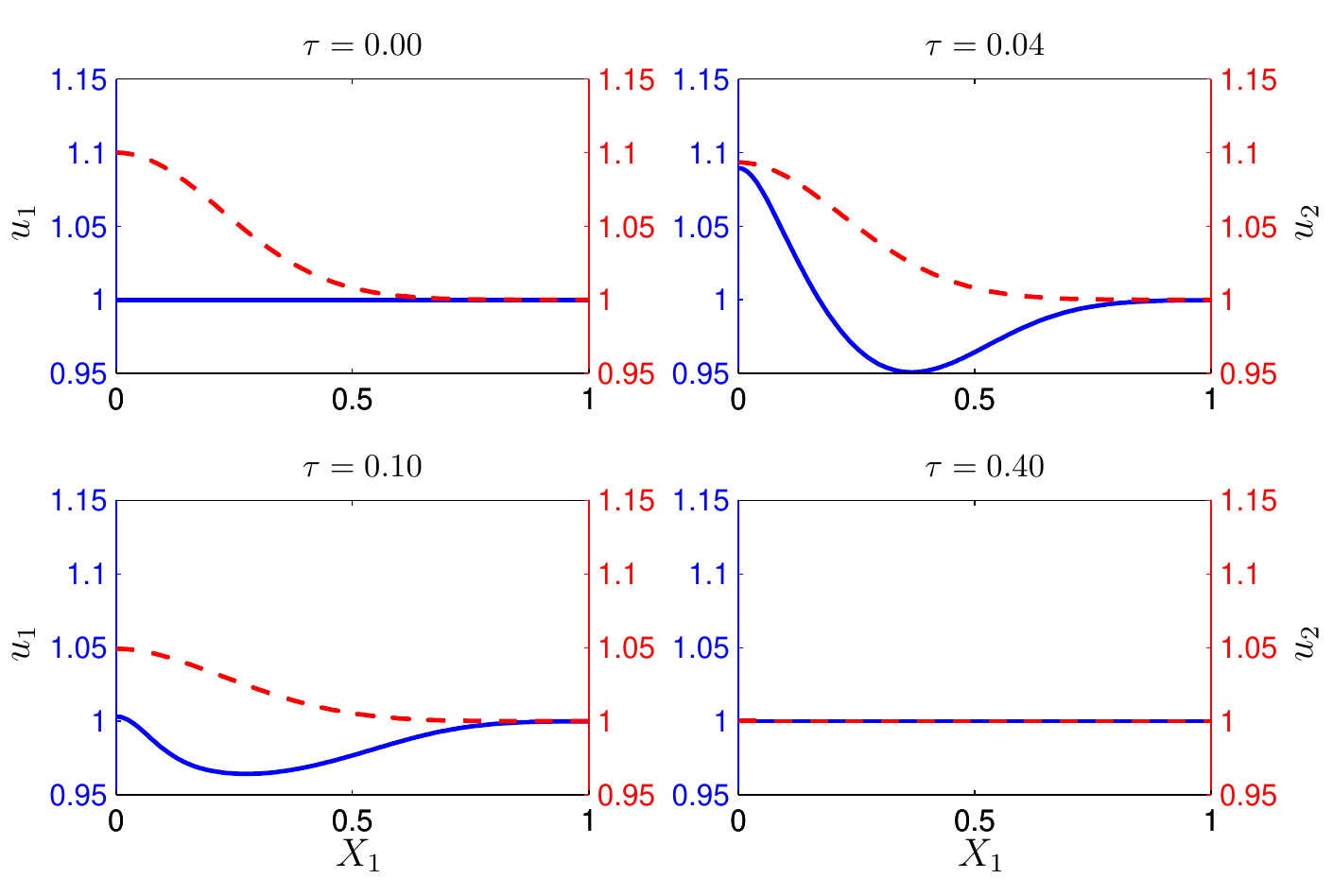}
\caption{Solution for $\chi = 100\mathrm{\;mm^3\;cell^{-1}\;day^{-1}}$. The solid blue line represents $u_1$ (left axis), the non-dimensional version of concentration of healthy cells and the dashed red line is $u_2$ (right axis), the concentration of infected cells. The concentration of infected cells $u_2$ is initially normally distributed. The concentration of free virus (not shown in the graph) is initially constant at $u_3 = 1$ and changes similarly to $u_2$.}
\label{fig:chemotaxis1d_chi100}
\end{center}
\end{figure}

For $\chi$ above this threshold, we see Turing pattern formation in the form of three uniformly separated peaks. This pattern remains and the peaks do not blow up in time; see Figure \ref{fig:chemotaxis1d_chi130}.

\begin{figure}
\begin{center}
\includegraphics[width=1\columnwidth]{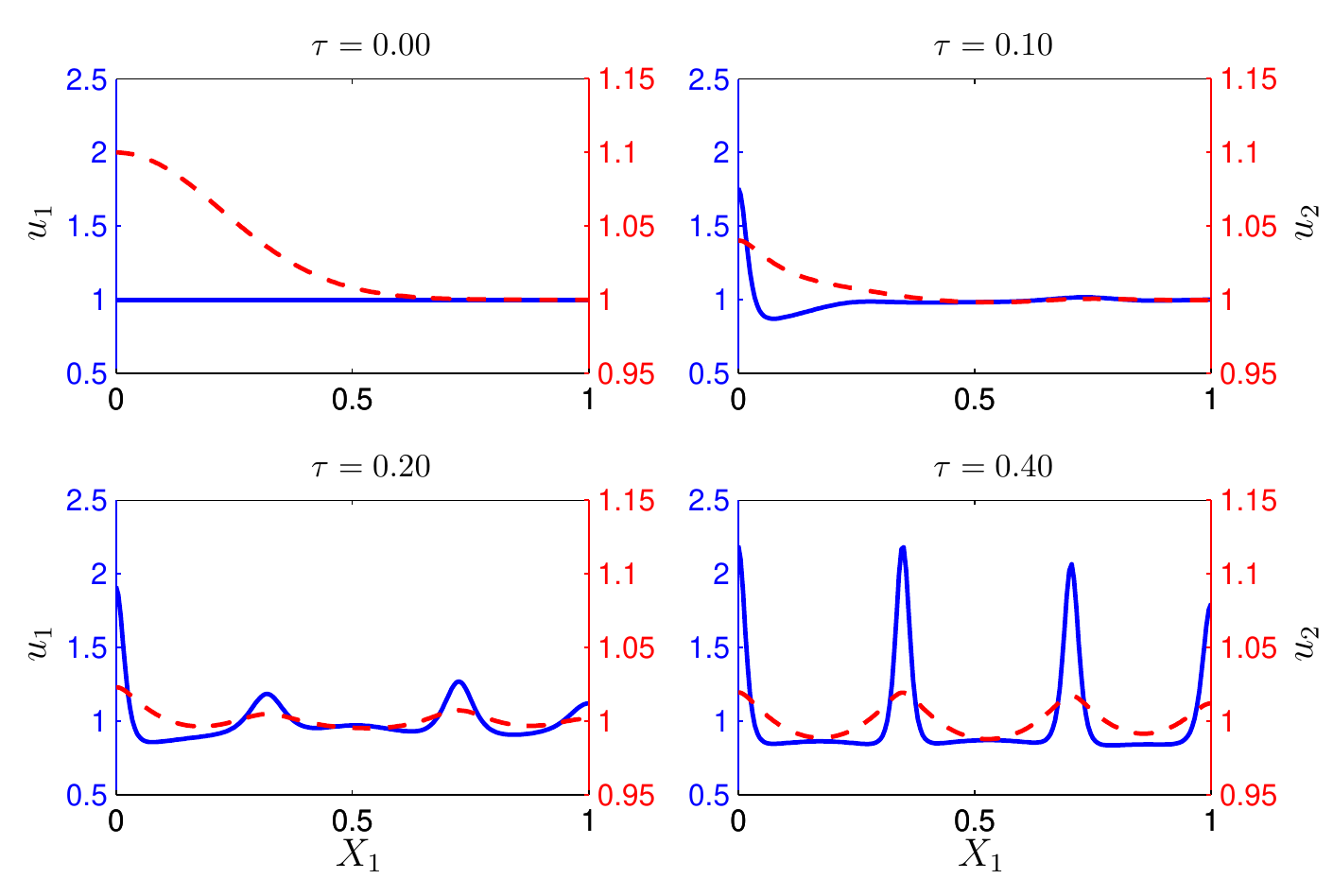}
\caption{Solution for $\chi = 130\mathrm{\;mm^3\;cell^{-1}\;day^{-1}}$ with same initial conditions as those in Figure \ref{fig:chemotaxis1d_chi100}. The solid blue line represents $u_1$ (left axis), the non-dimensional version of concentration of healthy cells and the dashed red line is $u_2$ (right axis), the concentration of infected cells.}
\label{fig:chemotaxis1d_chi130}
\end{center}
\end{figure}

As the strength of chemotaxis increases, we see that the spikes become steeper. However the solutions are still finite for all time; see Figure \ref{fig:chemotaxis1d_chi220_rand}.

\begin{figure}
\begin{center}
\includegraphics[width=1\columnwidth]{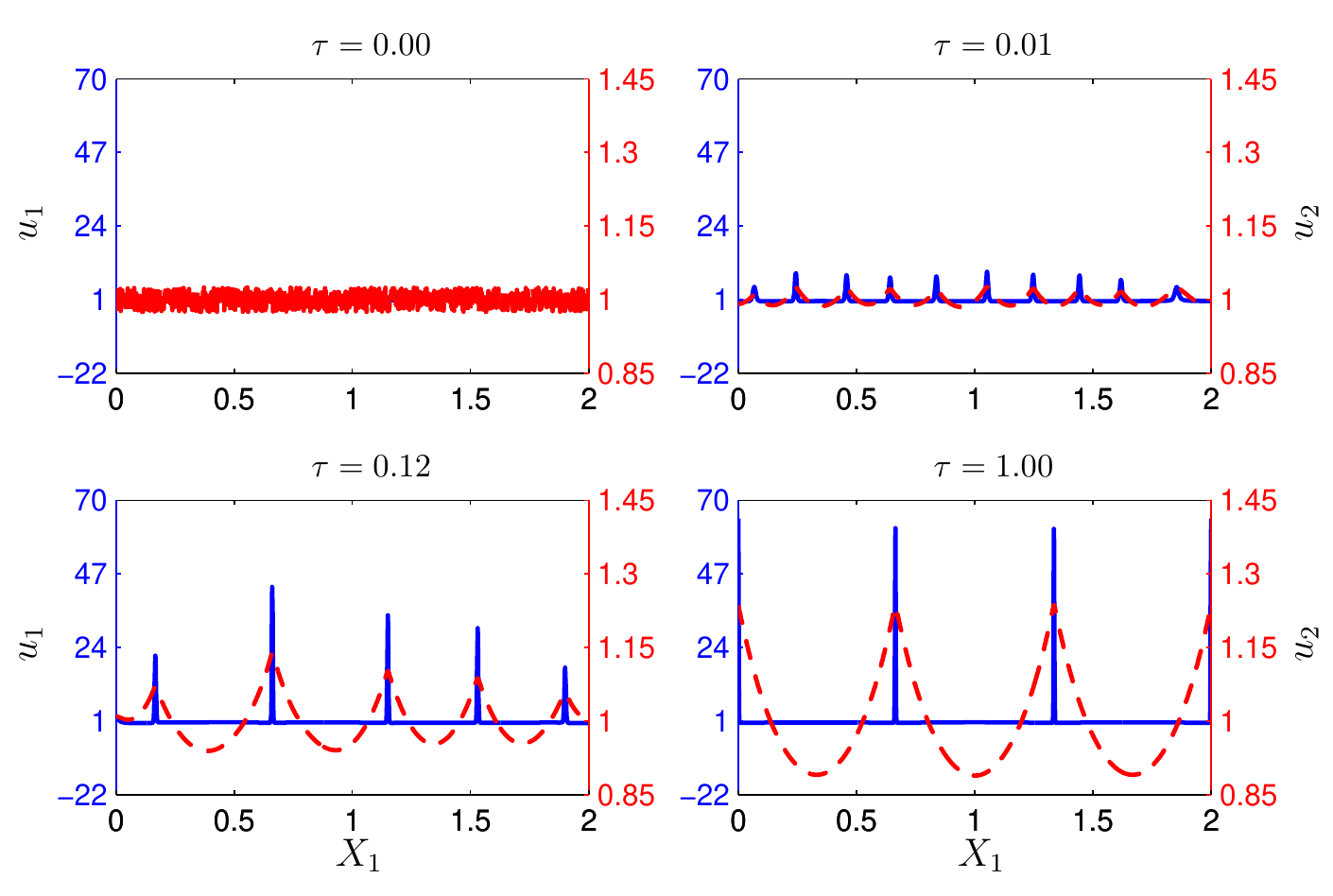}
\caption{Solution for $\chi = 220\mathrm{\;mm^3\;cell^{-1}\;day^{-1}}$ with $u_1 = u_3 = 1$ initially constant and $u_2$ taken from a uniform random distribution in $(0.975, 1.025)$. Note that the size of the domain has been doubled compared to the preceding figures. The solid blue line represents $u_1$ (left axis), the non-dimensional version of concentration of healthy cells and the dashed red line is $u_2$ (right axis), the concentration of infected cells.}
\label{fig:chemotaxis1d_chi220_rand}
\end{center}
\end{figure}

Increasing the length of the domain results in proportionally more spikes; see Figure \ref{fig:chemotaxis1d_chi130_long}. It should also be noted that in this case it takes a longer length of time for the solutions to settle to a steady state and the spikes to become uniform in height.
\begin{figure}
\begin{center}
\includegraphics[width=1\columnwidth]{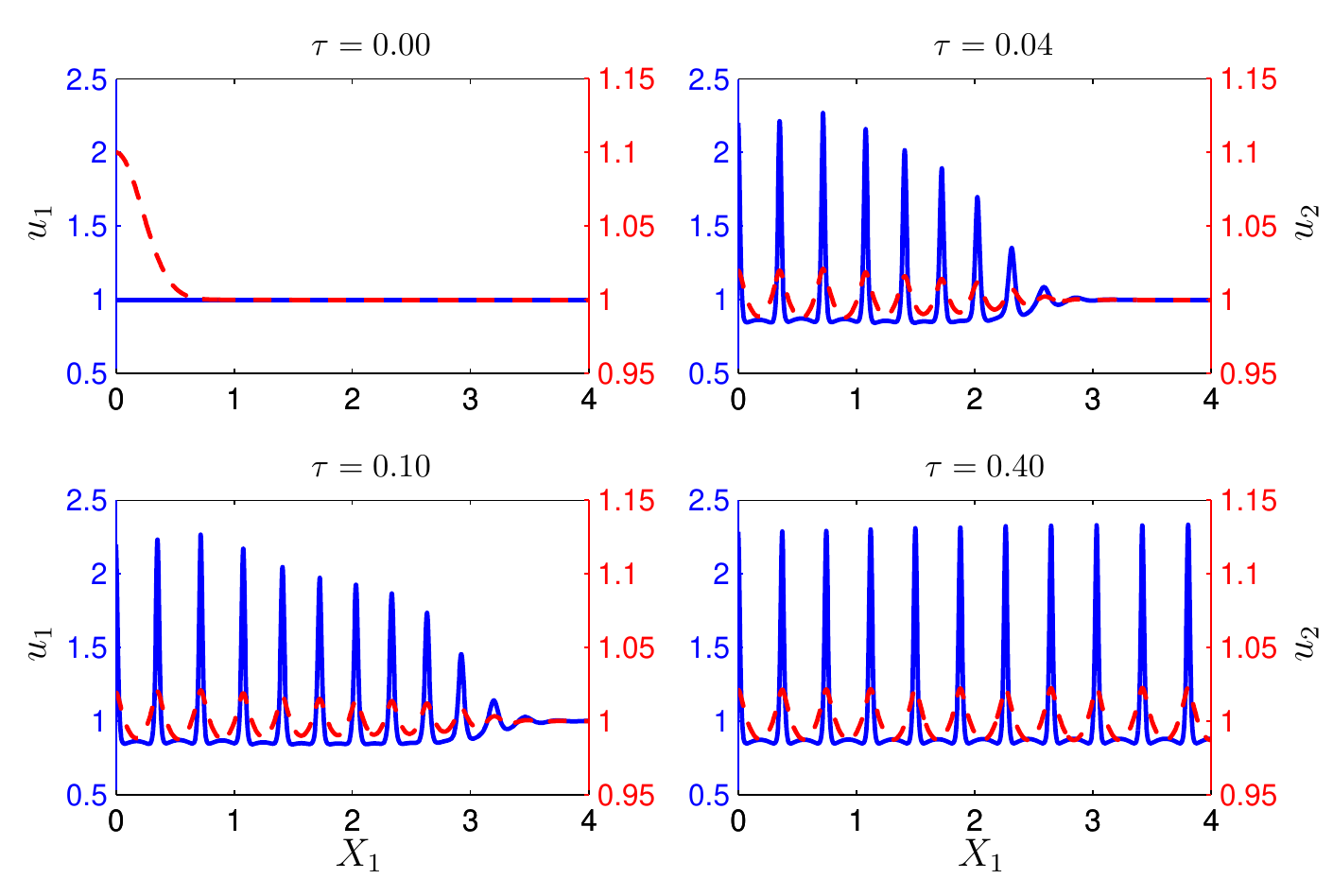}
\caption{Solution for $\chi = 130\mathrm{\;mm^3\;cell^{-1}\;day^{-1}}$ on a longer domain. The speed of propagation of the pattern is estimated to be $0.36\;\mathrm{mm\;day^{-1}}$. The solid blue line represents $u_1$ (left axis), the non-dimensional version of concentration of healthy cells and the dashed red line is $u_2$ (right axis), the concentration of infected cells.}
\label{fig:chemotaxis1d_chi130_long}
\end{center}
\end{figure}

Starting with a random initial condition of $u_2$ (e.g. uniform on $(0.975,1.025)$), the oscillations are initially at a higher frequency than for a smooth initial condition. However these oscillations quickly settle and produce a pattern similar to the case with normal initial distribution; see Figure \ref{fig:chemotaxis1d_chi220_rand}.

%

\subsection{Dependence on initial conditions}

The spatial frequency of the steady state solution may depend on the initial conditions for long domains. Figure \ref{fig:chemotaxis1d_init_dep} shows density profiles for two solutions with different initial distributions of infected cells --- the first one concentrated on the left border and second one concentrated in the middle of the region. The resulting final states contain a different number of peaks --- $16$ and $17$, respectively.

\begin{figure}[t]
  \begin{center}
    \includegraphics[width=1\columnwidth]{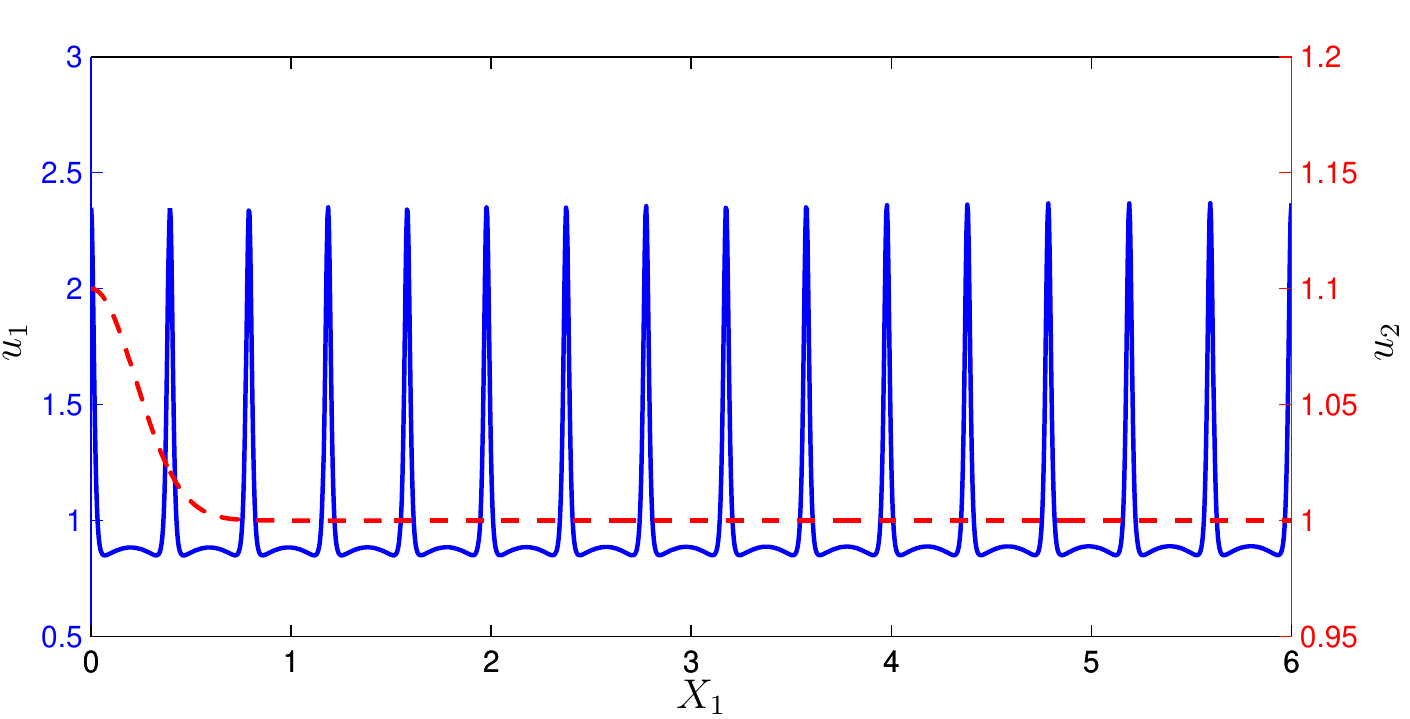}
    \includegraphics[width=1\columnwidth]{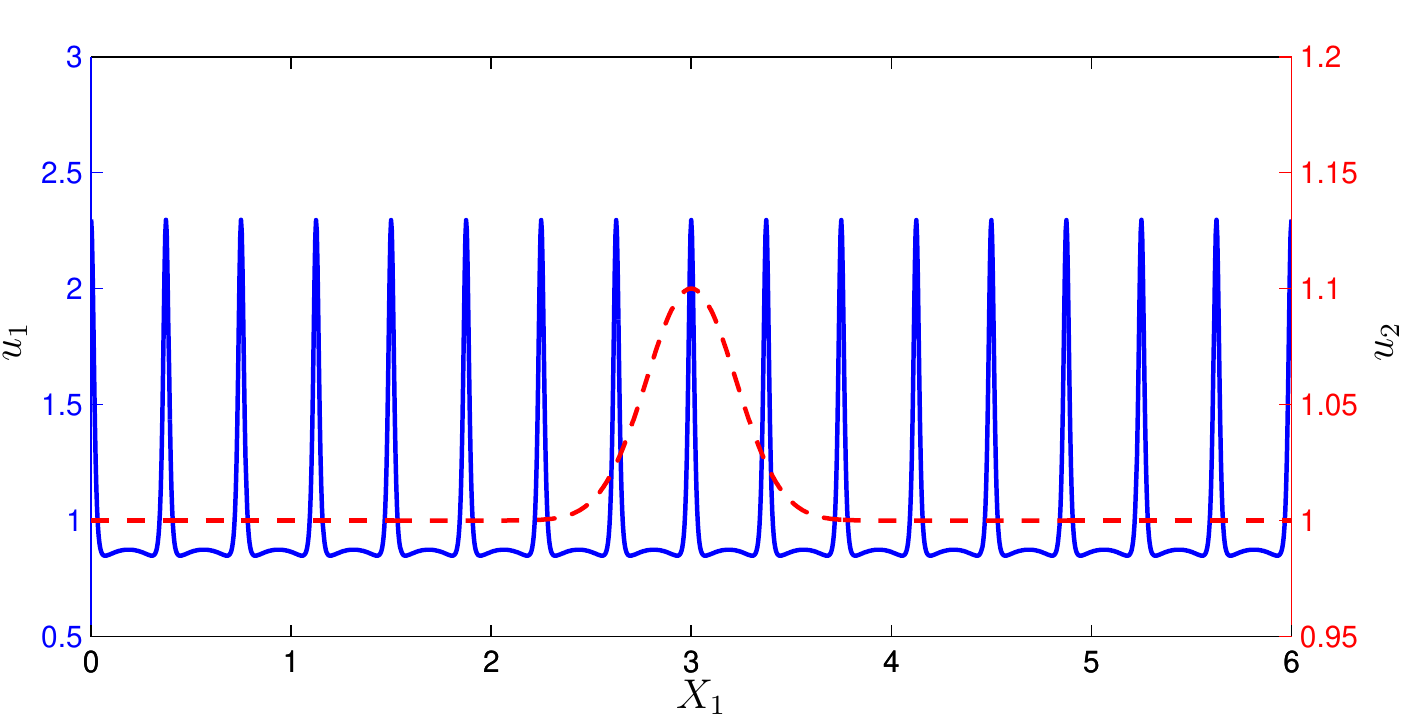}
  \end{center}
  \caption{Solution for $\chi = 130\mathrm{\;mm^3\;cell^{-1}\;day^{-1}}$. Different spatial frequencies resulting from different initial conditions. The red dashed lines indicate the initial values of $u_2$ and the blue unbroken lines indicate the final state of $u_1$. The other initial values are $u_1 = u_3 = 1$. Note the different number of peaks in the final states.}
  \label{fig:chemotaxis1d_init_dep}
\end{figure}

\subsection{Solutions with initial conditions near the disease-free steady state.}
We have already shown that the disease-free homogeneous steady-state is unstable when $skN>c\delta\mu$. Hence we can expect any disturbance of this state (i.e. the introduction of virus or infected cells) to result in system transitioning to the endemic steady state.

Because we are considering global behaviour of the system it is more useful to draw plots  in terms of re-dimensionalised variables (for example, it enables one to compare the number of infected cells to the number of healthy cells). In the absence of spatial variation a typical transition to the endemic steady state is shown in Figure \ref{fig:averages_2d_uniform}.

\begin{figure}[h]
  \begin{center}
    \includegraphics[width=0.9\columnwidth]{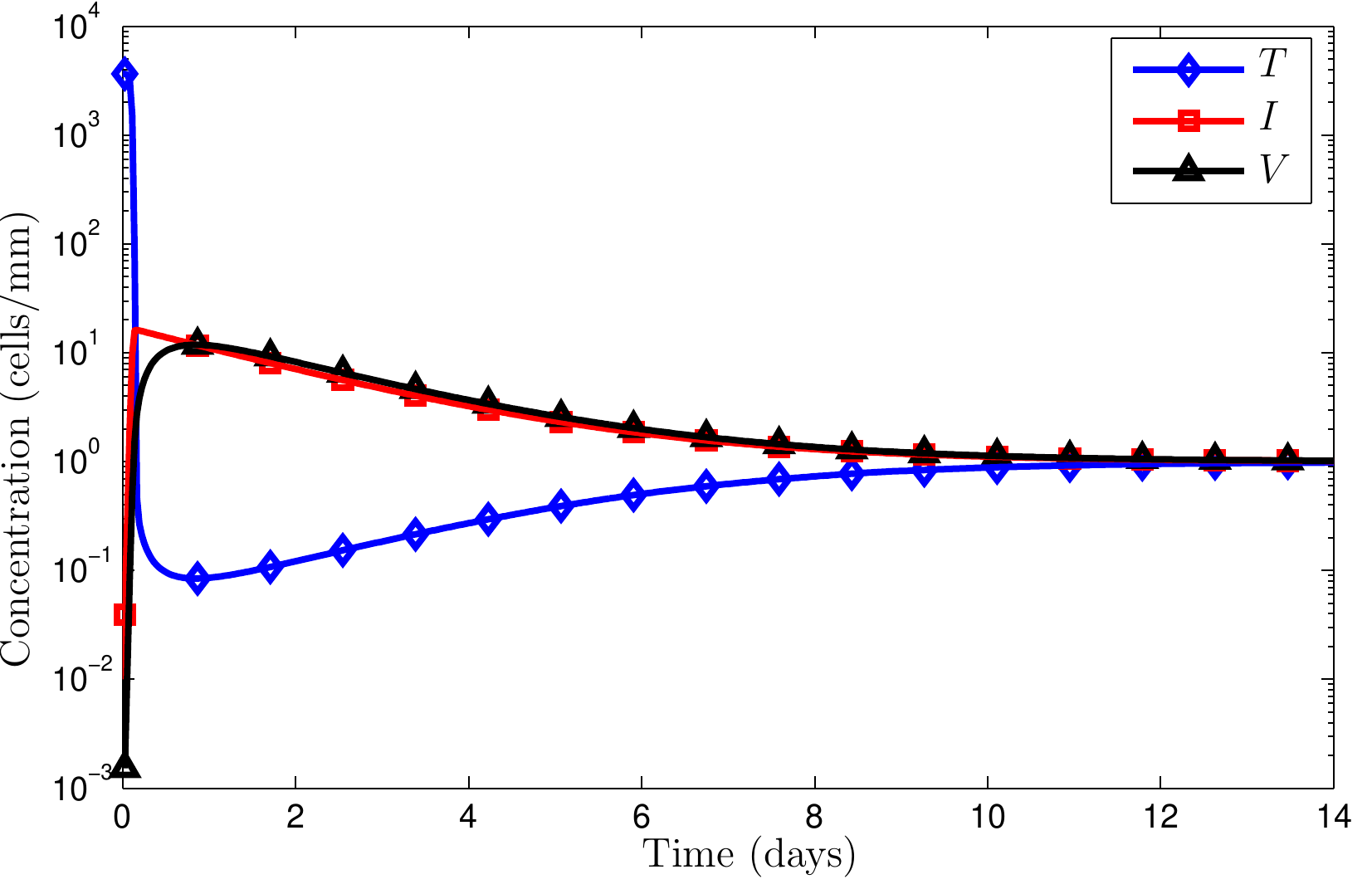}
  \end{center}
  \caption{A typical transition from disease-free steady state to endemic steady state in the absence of spatial variation.  }
  \label{fig:averages_2d_uniform}
\end{figure}

Figure \ref{fig:chemotaxis1d_chi130_trans} shows the transition from a small random disturbance around the disease-free steady state to the endemic steady state with chemotaxis $\chi=130\mathrm{\;mm^3\;cell^{-1}\;day^{-1}}$, above the critical threshold, regularised with $\epsilon = 1.0$. The resulting Turing pattern is identical to the pattern produced starting from a perturbation of the endemic steady state (see e.g. Figure \ref{fig:chemotaxis1d_chi130}).

\begin{figure}[h]
  \begin{center}
    \includegraphics[width=1\columnwidth]{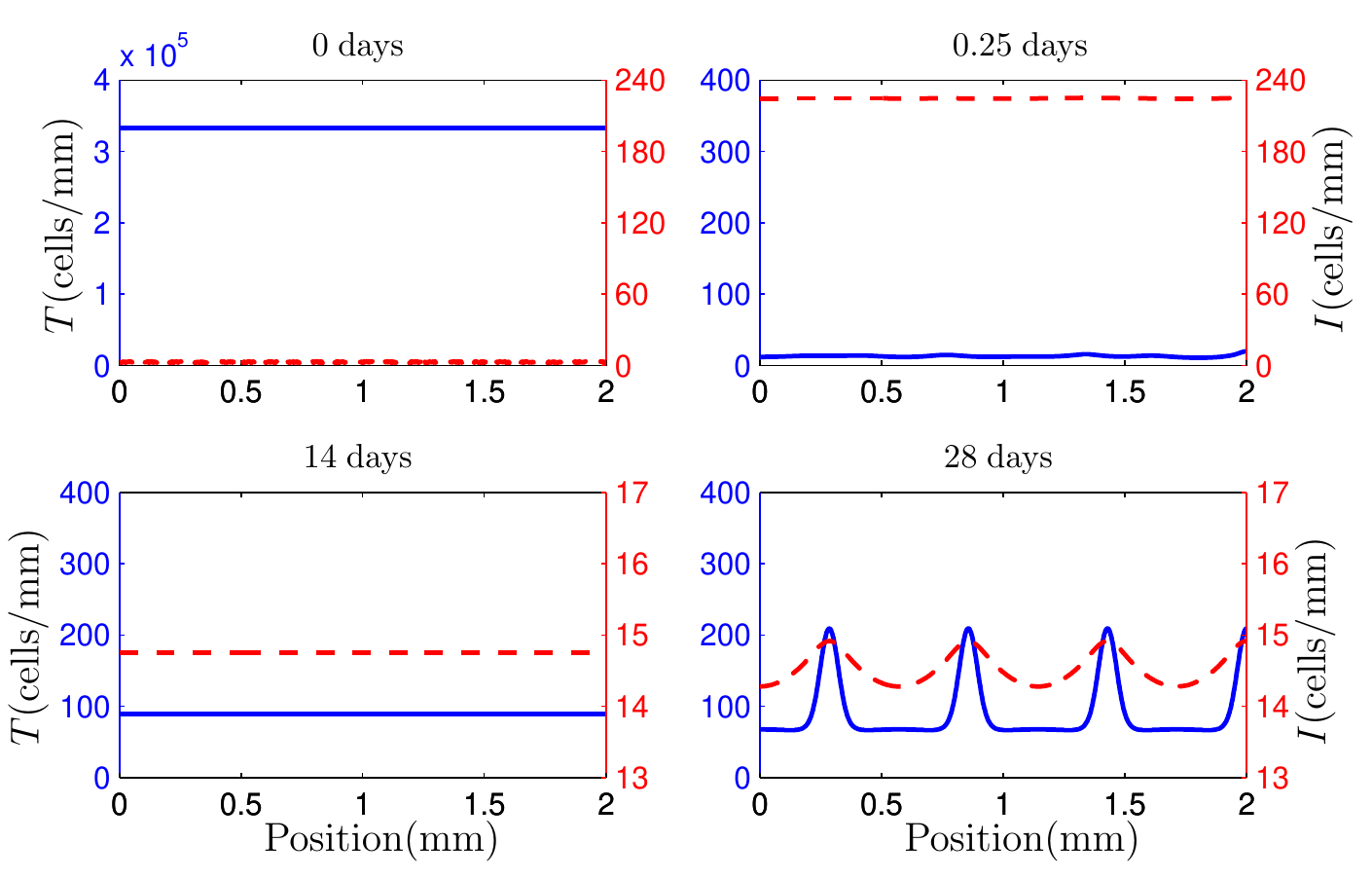}
  \end{center}
  \caption{A typical transition from the disease-free steady state to the endemic steady state in the presence of spatial variation. The chemotaxis parameters are $\chi = 130 \mathrm{\;mm^3\;cell^{-1}\;day^{-1}}$ and $\epsilon = 1.0$. Target cells $T$ (blue solid lines), infected cells $I$ (red dashed lines).}
  \label{fig:chemotaxis1d_chi130_trans}
\end{figure}

\begin{figure}[h]
  \begin{center}
    \includegraphics[width=1\columnwidth]{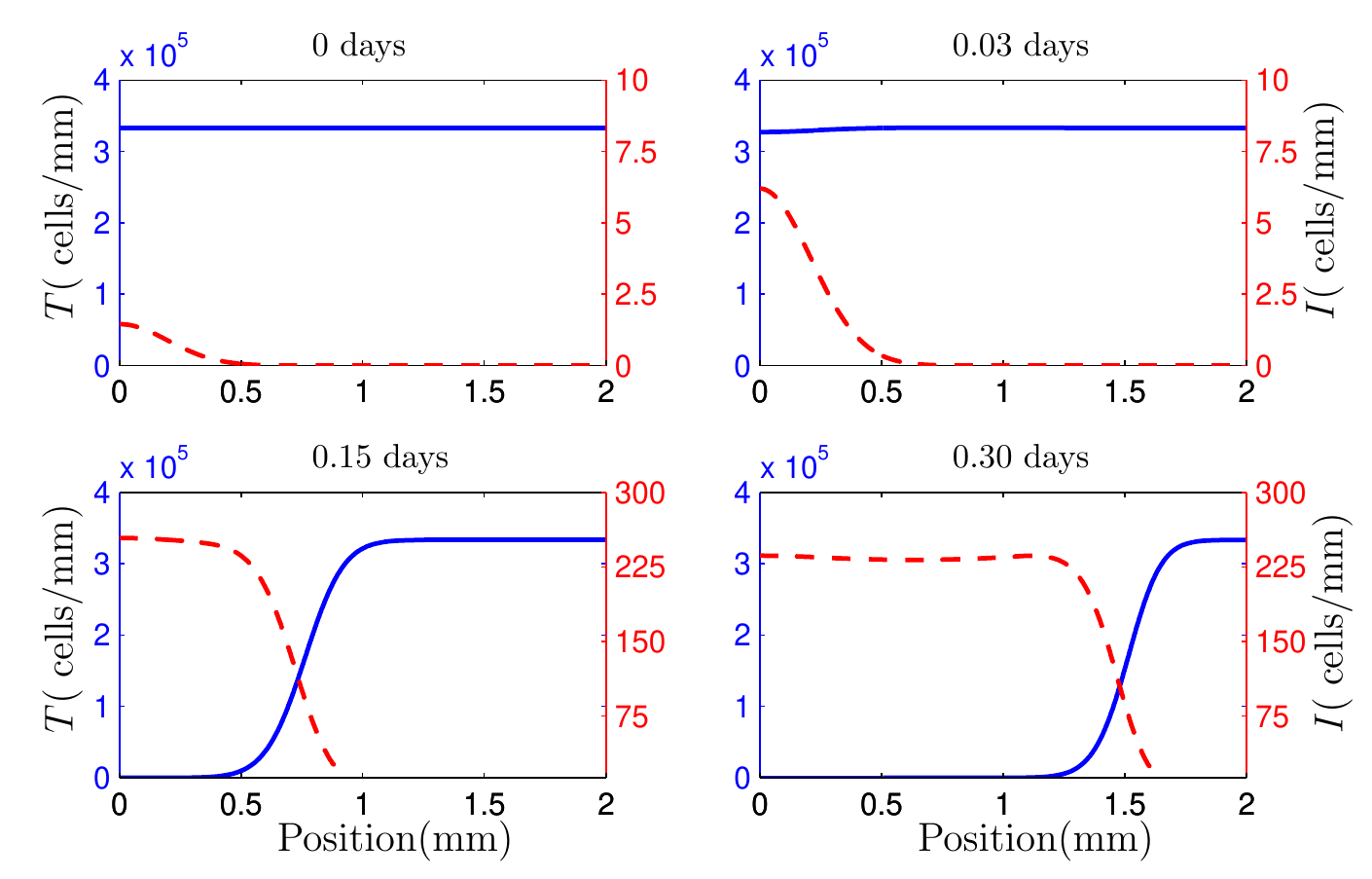}
  \end{center}
  \caption{Initial stage of infection with the distribution of infected cells localised near $X_1 = 0$. Chemotaxis was regularised with $\chi = 130\mathrm{\;mm^{3}\;cell^{-1}\;day^{-1}}$ and $\epsilon = 1.0$. The infection spreads at the rate of approximately $5\mathrm{\;mm}$ per day. Target cells $T$ (blue solid lines), infected cells $I$ (red dashed lines).}
  \label{fig:chemotaxis1d_chi130_trans_det}
\end{figure}

\subsection{Consequences of changing the virus clearance rate}
Figure \ref{fig:chemotaxis1d_chi130_changec} shows the change in the Turing pattern after the rate of virus clearance increases from $c=3\mathrm{\;day^{-1}}$ to $c=4\mathrm{\;day^{-1}}$. Even though the overall number of infected cell reduces, the maximum density of infected cells remains roughly the same as with the lower rate of clearance. Note that increasing  $c$ in this way will result in a lower chemotactic threshold for Turing pattern formation.

\begin{figure}[h]
  \begin{center}
    \includegraphics[width=1\columnwidth]{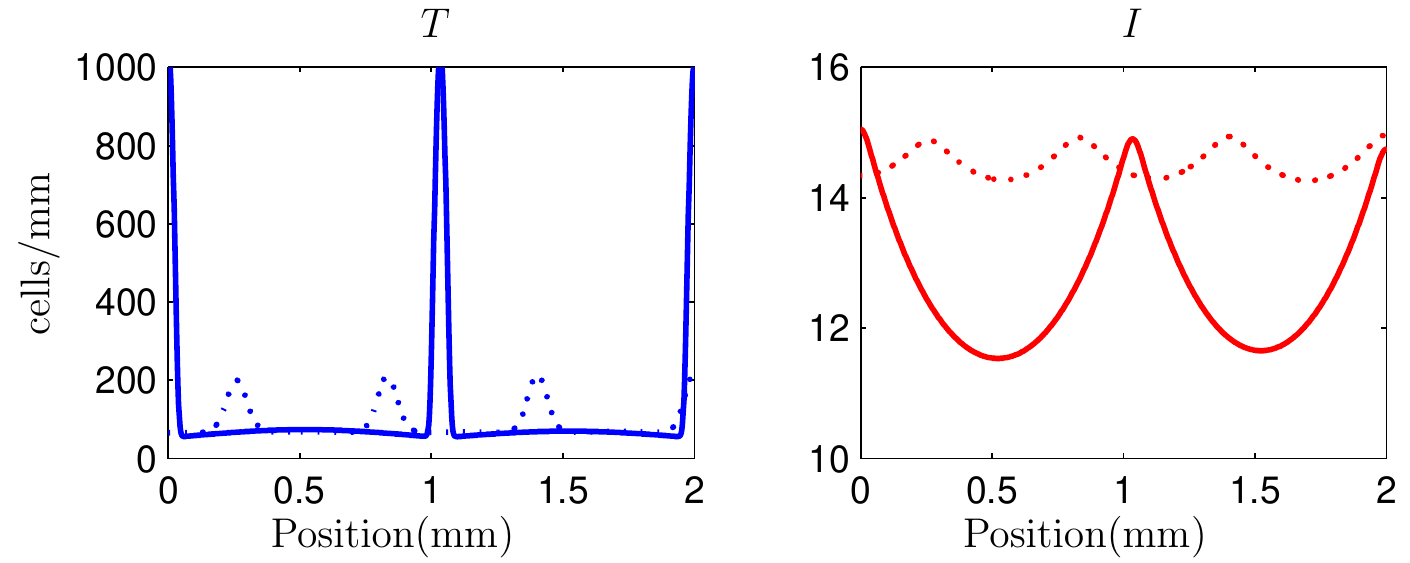}
  \end{center}
  \caption{Change in Turing patterns after clearance rate changes from $c=3\mathrm{\;day^{-1}}$ to $c=4\mathrm{\;day^{-1}}$. The dotted lines indicate the initial densities and the solid lines indicate the final densities of non-infected cells $(T)$ and infected cells $(I)$.}
  \label{fig:chemotaxis1d_chi130_changec}
\end{figure}

\section{Numerical Solutions in two spatial dimensions}

The two-dimensional model exhibits a blow up of solutions in finite time for sufficiently strong chemotactic attractions. For this reason we have employed a regularisation as described in \eqref{eq:model_reg}. Setting the thickness of the surface on which we are modelling the infection to $h=0.1\mathrm{mm}$, results in the critical value for chemotaxis $\chi$ to lie somewhere between $11$ and $12\;\mathrm{mm^4\;cell^{-1}\;day^{-1}}$.

The governing PDE was solved numerically on a square domain with sides of length $L\approx1.78\mathrm{\;mm}$ (i.e. $X_1 = X_2 = 1$). The method of lines was used, which involved semi-discretising \eqref{eq:model_reg} in both space variables on a uniform rectangular grid and solving the resulting ODE system in time using MATLAB's implementation of the Runge-Kutta method, {\tt ode45}. The results for the endemic steady state are shown in Figure \ref{fig:chemotaxis2d_chi12_eps10_rand} and Figure \ref{fig:chemotaxis2d_chi12_T}.

While the two-dimensional model is more difficult to solve numerically and some form of chemotactic regularisation is required we did not observe any qualitative behaviour that was different from the setting of only one dimension in space.
\begin{figure}[h!]
\begin{center}
\includegraphics[width=1\columnwidth]{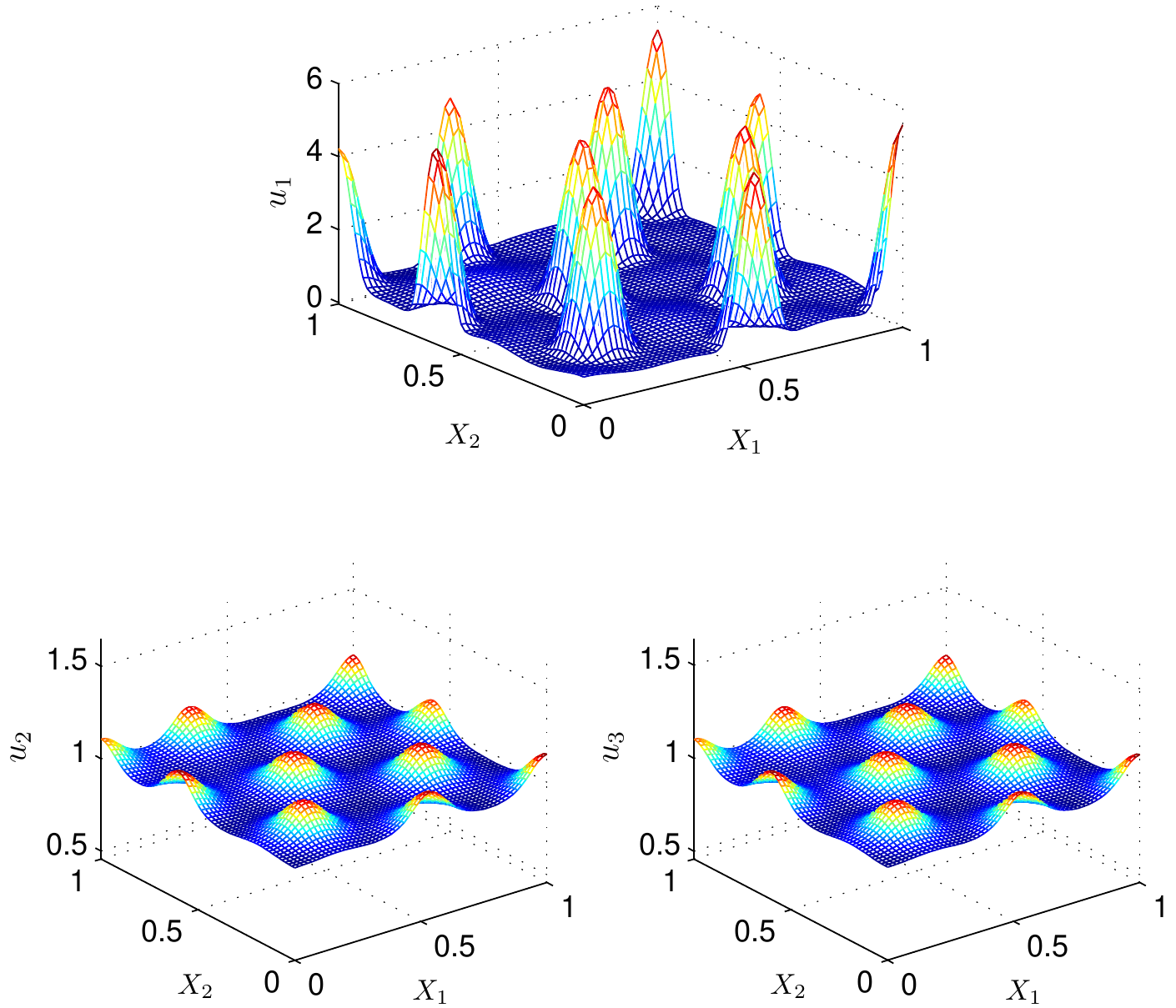}
\caption{Steady state solution for $\chi = 12\mathrm{\;mm^4\;cell^{-1}\;day^{-1}}$, $\epsilon = 10$.  }
\label{fig:chemotaxis2d_chi12_eps10_rand}
\end{center}
\end{figure}

\begin{figure}[h!]
\begin{center}
\includegraphics[width=1\columnwidth]{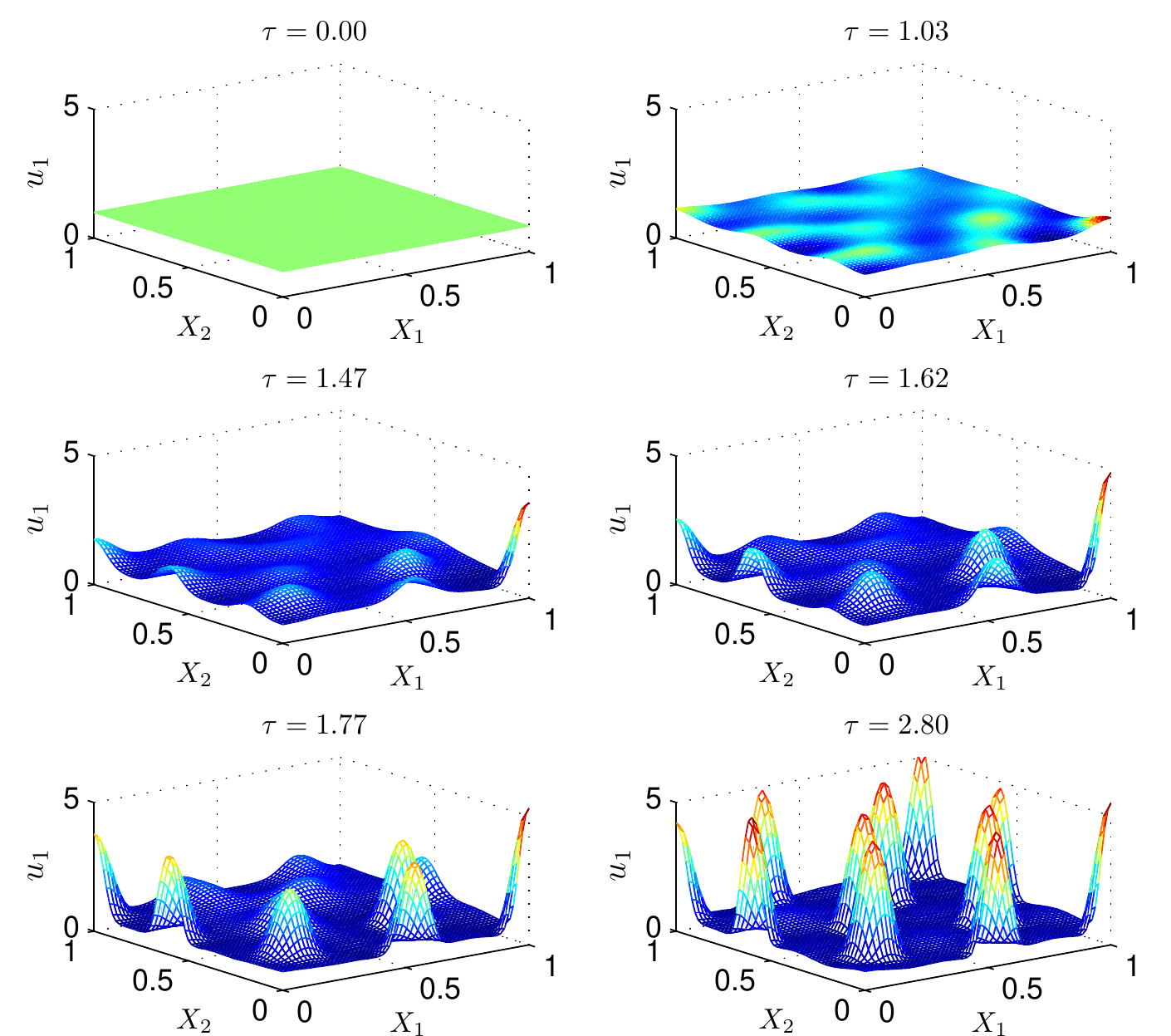}
\caption{Density of target cells at different points in time for $\chi = 12\mathrm{\;mm^4\;cell^{-1}\;day^{-1}}$, $\epsilon=10$. The initial conditions are taken as a random perturbation of the endemic steady state.}
\label{fig:chemotaxis2d_chi12_T}
\end{center}
\end{figure}

\section{Discussion}

We have shown analytically and numerically that HIV infection need not be spatially homogeneous and can exhibit Turing patterns. The patterns can only occur if i) the conditions for existence of an infected steady state hold ($skN>c\delta\mu$), and ii) the chemotactic attraction is strong enough. 
With the parameter values as outlined in Section 9, the estimated time for  transition to the nonhomogeneous endemic state to occur locally (i.e. on a small patch of tissue) is around 14 days (see Figure \ref{fig:averages_2d_uniform}).

Larger chemotactic attraction results in a larger range of possible spatial frequencies for the Turing patterns as well as higher amplitudes of the peaks. While the resulting pattern may depend on the initial state of the infection, we found that in most cases it does not. However, different initial states may significantly affect the speed at which the pattern is established, with small random perturbations of the endemic state enabling faster settling to a Turing pattern than localised perturbations which in turn settle faster than from a perturbation of the disease-free steady state.

We found that if the initial perturbation is local, the pattern tends to get established near the perturbation, and then propagates outwards, whereas for random initial perturbations the patterns emerge more or less simultaneously across the whole region.

Upon getting infected by HIV it typically takes 1-2 months \cite{Murray1998} for the body's immune system to respond by increasing the rate at which the virus is cleared. This is enough time for the initial Turing pattern to form. After the clearance rate is increased, it is possible for new, more prominent and less frequent Turing patterns to form. Also, even if the system was in a state that does not admit any patterns and the infection becomes spatially homogeneous, such a response from the immune system may give rise to pattern formation.

This analysis indicates that foci of HIV infection that are observed in tissue, can be established as a result of the dynamics of the system irrespective of any spatial heterogeneity in the tissue. These foci may provide an environment where new infection of cells outweighs immune clearance and hence supports the maintenance of infection despite an expanding immune response. Our analysis indicates that these patterns are established over a longer time scale than would be relevant for the impact of a microbicide which acts at the very earliest stages of infection. This analysis however assumes that the surface itself representing the vaginal or rectal epithelium, is homogeneous. This is not the case however and further work to assess the impact of the generation of Turing patterns in a more realistic environment is required.

\acknowledgement{We greatly acknowledge discussions with T.A.M. Langlands and P.J. Klasse on aspects of this work. This research was assisted through the support from the UNSW Goldstar Scheme.}
\clearpage

\appendix \label{appendix}
\section*{Appendix: Showing conditions for Turing instability}
\begin{proposition}
  Consider the characteristic polynomial \eqref{eq:chp_simple}. Provided $a_1, a_2, b_1, b_2, b_3, c_1, c_2, c_3$ and $c_4$ are all positive, conditions (S1)--(S4) hold.
\end{proposition}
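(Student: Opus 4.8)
The plan is to begin by reading off the nine coefficients obtained by matching \eqref{eq:chp_simple} term by term against the expanded characteristic polynomial \eqref{eq:chr}:
\begin{align*}
a_1 &= \xi+\alpha+\beta, & a_2 &= 1+d_I+d_V, & b_1 &= \xi(\alpha+\beta),\\
b_3 &= d_Id_V+d_I+d_V, & c_1 &= (\xi-1)\alpha\beta, & c_4 &= d_Id_V,
\end{align*}
together with the three coefficients that carry the chemotaxis strength, each \emph{affine} in $d_\chi$:
\begin{align*}
b_2 &= \hat{b}_2 - \alpha d_\chi,\\
c_2 &= \xi(\alpha d_V+\beta d_I) - \alpha\beta d_\chi,\\
c_3 &= (\xi d_Id_V+\alpha d_V+\beta d_I) - \alpha d_V d_\chi,
\end{align*}
where $\hat{b}_2 := \alpha d_V+\beta d_I+\xi d_I+\alpha+\xi d_V+\beta$. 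With these in hand, (S1) reads $(\xi+\alpha+\beta)\xi(\alpha+\beta) > (\xi-1)\alpha\beta$, which is \emph{exactly} the no-diffusion stability inequality \eqref{eq:hom_st_cond} already established, so it needs nothing new; and (S4) is immediate, since $a_2>1$ and $b_3 = d_Id_V+d_I+d_V > d_Id_V = c_4$ give $a_2b_3 > b_3 > c_4$.

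The substance lies in (S2) and (S3), and the organizing idea is that $d_\chi$ enters only through $b_2,c_2,c_3$, and only linearly. Hence the differences $g_2 := a_1b_2 + a_2b_1 - c_2$ and $g_3 := a_1b_3 + a_2b_2 - c_3$ are themselves affine in $d_\chi$, and a one-line computation gives their slopes as $-\alpha(\xi+\alpha)$ and $-\alpha(1+d_I)$ respectively --- both negative. Each difference is therefore \emph{decreasing} in $d_\chi$, so it suffices to bound it below at the largest admissible value of the chemotaxis strength. The hypothesis $b_2>0$ forces $d_\chi < \hat{b}_2/\alpha$, so I would evaluate each decreasing difference at this endpoint $d_\chi=\hat{b}_2/\alpha$ (where $b_2$ vanishes); this collapses the $d_\chi$-dependence and leaves a polynomial inequality in the strictly positive parameters $\alpha,\beta,\xi,d_I,d_V$ alone.

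A short computation reduces (S2) at the endpoint to $\beta\hat{b}_2 + a_2b_1 > \xi(\alpha d_V+\beta d_I)$, which closes by inspection: expanding $a_2b_1 = (1+d_I+d_V)\xi(\alpha+\beta)$ already produces the monomials $\xi\beta d_I$ and $\xi\alpha d_V$, whose sum is precisely the right-hand side, so every other term is surplus. Similarly (S3) reduces to $a_1b_3 + d_V\hat{b}_2 > \xi d_Id_V+\alpha d_V+\beta d_I$; after cancellation the only term threatening the inequality is a lone $-\beta d_I$ (relevant when $d_V<1$), but it is absorbed by the monomial $\beta d_I$ sitting inside $a_1b_3 = (\xi+\alpha+\beta)(d_Id_V+d_I+d_V)$, leaving a manifestly positive remainder. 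I expect the main obstacle to be purely organizational rather than conceptual: because $b_2,c_2,c_3$ may individually be negative one cannot argue coefficient-by-coefficient, and the affine-in-$d_\chi$ reduction to the boundary $b_2=0$ is exactly the device that converts the two genuinely nontrivial inequalities into routine monomial matching.
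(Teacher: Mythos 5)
Your proof is correct, and for (S1) and (S4) it coincides with the paper's: (S1) is exactly the zero-mode stability inequality \eqref{eq:hom_st_cond} already established in the text, and (S4) is an immediate monomial comparison. For (S2) and (S3), however, you organize the argument around a device the paper does not use. The paper proceeds by direct regrouping: it splits $a_1b_2+a_2b_1$ (resp.\ $a_1b_3+a_2b_2$) so that $c_2$ (resp.\ $c_3$) appears explicitly as a summand, and the leftover terms---among them blocks proportional to $b_2$, positive by hypothesis---form a manifestly positive surplus. You instead exploit the affine dependence on $d_\chi$: the differences $a_1b_2+a_2b_1-c_2$ and $a_1b_3+a_2b_2-c_3$ have negative slopes $-\alpha(\xi+\alpha)$ and $-\alpha(1+d_I)$ (both of which I have checked), so it suffices to verify positivity at the endpoint $d_\chi=\hat{b}_2/\alpha$ where $b_2$ vanishes, reducing each claim to a $d_\chi$-free inequality settled by monomial matching; those endpoint inequalities are also correct. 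The two arguments hinge on the same operative hypothesis ($b_2>0$), and in fact unwinding your endpoint evaluation reproduces the paper's decomposition verbatim, e.g.
\begin{equation*}
a_1b_2+a_2b_1-c_2=(\xi+\alpha)b_2+(1+d_I)b_1+\xi\beta d_V+\beta\left(\alpha d_V+\beta d_I+\alpha+\xi d_V+\beta\right),
\end{equation*}
so the underlying algebra is identical. What your organization buys is that the regrouping need not be guessed: the monotonicity-in-$d_\chi$ reduction mechanically isolates the boundary case $b_2=0$ and makes transparent that $d_\chi<\hat{b}_2/\alpha$ is the real constraint (so (S2) and (S3) need no assumption on the signs of $c_2,c_3$ at all). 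The paper's route, once the grouping is found, is shorter to write down but reads as unmotivated bookkeeping; yours is longer to set up but self-explanatory and would generalize more easily to characteristic polynomials of higher degree in $q^2$.
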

\begin{proof} We demonstrate each of the conditions in turn below.
  \begin{itemize}
    \item[(S1)]
	\begin{align*}
	  a_1b_1 & = (\xi+\alpha+\beta)(\xi\alpha+\xi\beta) \\
	  & > \xi\alpha\beta\\
	  & > (\xi-1)\alpha\beta = c_1;
	\end{align*}
      \item[(S2)]
	\begin{align*}
  	a_1b_2 + a_2b_1 =& (\xi+\alpha)b_2 + \beta b_2 + (1+d_I)b_1 + d_V b_1\\
  	=& (\xi+\alpha)b_2 + (1+d_I)b_1 \\
	& + \beta(\alpha d_V + \beta d_I + \xi d_I + \alpha + \xi d_V + \beta - \alpha d_{\chi}) + \xi\alpha d_V + \xi\beta d_V\\
  	= & (\xi + \alpha)b_2 + (q+d_I)b_1 + \beta(\alpha d_V + \beta d_I + \alpha + \xi d_V + \beta) + \xi\beta d_V \\
  	& + (\xi\alpha d_V + \xi\beta d_I - \alpha \beta d_{\chi})\\
  	> & \xi\alpha d_V + \xi\beta d_I - \alpha \beta d_{\chi} =  c_2;
	\end{align*}
      \item[(S3)]
	\begin{align*}
	  a_1b_3  + a_2b_2 =& (\xi+\alpha+\beta)(d_Id_V + d_I + d_V) \\
	  &+(1+d_I+d_V)(\alpha d_V + \beta d_I + \xi d_I + \alpha + \xi d_V + \beta - \alpha d_\chi)\\
	   >& \xi d_I d_V + \alpha d_V + \beta d_I \\
	  & + (1+d_I)b_2 + d_V(b_2 + \alpha d_\chi) - \alpha d_\chi d_V\\
	  > & \xi d_I d_V + \alpha d_V + \beta d_I - \alpha d_\chi d_V = c_3;
	\end{align*}
      \item[(S4)]
	\begin{align*}
	  a_2b_3 = & (1+d_I + d_V)(d_I d_V + d_I + d_V)\\
	  > & d_I d_V = c_4.
	\end{align*}
  \end{itemize}
  \qed
\end{proof}

\bibliographystyle{plain}
\bibliography{hiv-2012}

\end{document}